
\documentclass[letterpaper, 10 pt, conference]{ieeeconf}  

\IEEEoverridecommandlockouts                              
\overrideIEEEmargins

\usepackage{graphicx} 
\usepackage{times} 
\usepackage{amsmath} 
\usepackage{amssymb}  
\usepackage{cleveref}
\usepackage{subcaption}
\usepackage{pgfplots}

\newcommand{\R}{\mathbb{R}}

\newcommand{\Lcal}{\mathcal{L}}

\DeclareMathOperator{\aslim}{as-lim}

\newtheorem{theorem}{Theorem}
\newtheorem{lemma}{Lemma}

\newtheorem{corollary}{Corollary}
\newtheorem{remark}{Remark}

\usepackage[noadjust]{cite}

\def\doubleunderline#1{\underline{\underline{#1}}}

\title{\LARGE \bf Covariance-Robust Dynamic Watermarking}


\author{Matt Olfat, Stephen Sloan, Pedro Hespanhol, Matt Porter, Ram Vasudevan, and Anil Aswani
\thanks{This material is based upon work partially supported by the National Science Foundation under Grant CMMI-1847666, by the UC Berkeley Center for Long-Term Cybersecurity, and by a grant from Ford Motor Company via the Ford-UM Alliance under award N022977.}
\thanks{Matt Olfat, Steve Sloan, Pedro Hespanhol, and Anil Aswani are with the Department of Industrial Engineering and Operations Research, University of California, Berkeley 94720 {\tt\small \{molfat,stephen\_sloan,pedrohespanhol,aaswani\}@}}%
\thanks{\hspace{-8pt}\tt\small berkeley.edu}
\thanks{Matt Porter and Ram Vasudevan are with the Department of Mechanical Engineering, University of Michigan, Ann Arbor 48109 {\tt\small \{matthepo,ramv\}@umich.edu}}}%
								

\begin{document}

\maketitle
\thispagestyle{empty}
\pagestyle{empty}

\begin{abstract}
Attack detection and mitigation strategies for cyberphysical systems (CPS) are an active area of research, and researchers have developed a variety of attack-detection tools such as dynamic watermarking. However, such methods often make assumptions that are difficult to guarantee, such as exact knowledge of the distribution of measurement noise. Here, we develop a new dynamic watermarking method that we call covariance-robust dynamic watermarking, which is able to handle uncertainties in the covariance of measurement noise. Specifically, we consider two cases. In the first this covariance is fixed but unknown, and in the second this covariance is slowly-varying. For our tests, we only require knowledge of a set within which the covariance lies. Furthermore, we connect this problem to that of algorithmic fairness and the nascent field of fair hypothesis testing, and we show that our tests satisfy some notions of fairness. Finally, we exhibit the efficacy of our tests on empirical examples chosen to reflect values observed in a standard simulation model of autonomous vehicles.
\end{abstract}

\section{INTRODUCTION}

The development of 5G, the ``fifth generation" of wireless technology, brings with it increased bandwidth, massive-scale device-to-device (D2D) connections, lower latency, and high reliability. The latency reductions with 5G open the door to further growth in cyberphysical systems (CPS), which involve the intercommunication and real-time management of large numbers of physical sensors and actuators, often in shifting environments \cite{ferrag2018security}. System vulnerabilities to malicious agents abound in all of these technologies \cite{abrams2008malicious,langner2011stuxnet,cardenas2008research}, and 5G in particular necessitates more robust cyber-security measures for the relevant control systems \cite{ferrag2018security}. 

Much of the existing work on security for CPS assumes that the system is fixed and all required distributions are exactly known \cite{satchidanandan2016dynamic,weerakkody2014detecting,mo2014detecting}. However, the system description is often time-varying or partially uncertain for many CPS \cite{zhang2016uplink,ahmad2019quality}.  Given this real-world motivation of security for time-varying or partially unknown CPS, we focus in this paper on designing robust security schemes to test for adversarial attacks on LTI systems. Recent work has established dynamic watermarking as a key active method for detecting sensor attacks \cite{satchidanandan2016dynamic,weerakkody2014detecting,satchidanandan2017minimal,mo2009secure,mo2010false,mo2014detecting,mo2015physical,porter2020detecting}, and here we build on this work by designing covariance-robust dynamic watermarking. 

We design robust watermarking for two sub-cases: The first is where the covariance of measurement noise is fixed but unknown, and the second is where the covariance of measurement noise is unknown and slowly-varying. The first reflects a scenario of many nearly-identical systems with variation between copies of the system. The second sub-case reflects a scenario where a sensor has different accuracy in varying regimes, such as lidar on an autonomous vehicle in changing weather. Attack detection is critical in all of these such cases, and we need statistical tests that retain their power in the face of system changes or uncertainty.


\subsection{Fairness}
\label{sec:fairness}

Robust data-driven decision-making has gained attention in the literature on algorithmic fairness. Motivated by machine learning tasks with societal applications, the fairness literature has sought to design learning methods that refrain from considering certain variables. To that extent, this body of work defines rigorous, mathematical notions of fairness for supervised learning \cite{berk2017fairness,hardt2016equality,calders2009building,dwork2012fairness,zliobaite2015relation,olfat2018spectral,chouldechova2017fair,zafar2017}, which have recently been extended to unsupervised learning by \cite{chierichetti2017fair,olfat2019convex}. 

The work in \cite{aswani2019optimization} outlines a general framework: Consider $(X,Y,Z)$ with a joint distribution $\mathbb{P}$, where $X$ are exogenous inputs, $Y$ are endogenous ``targets", and $Z$ is a ``protected attribute". The goal is to choose a \textit{decision rule} $\delta(x)$ that makes a decision $d$ using inputs $X$, in order to minimize some \textit{risk function} $\mathcal{R}_{\mathbb{P}}(\delta,Y)$. In dynamic watermarking: $X$ are measurements, $Y$ is a binary variable that denotes if the system is under attack, and $Z$ is the true system characterization; our decision rule $\delta$ for if the system is under attack is made without $Y$ and $Z$, which are not observed. We then define a decision rule to be without \textit{disparate impact} if
\begin{equation}
\label{eq:dispimp}
\delta^*\in\arg\min_{\delta}\big\{R_{\mathbb{P}}(\delta,Y)\ \big|\ \delta(X)\perp \!\!\! \perp Z\big\},
\end{equation}
where $\delta(X)\perp \!\!\! \perp Z$ means $\delta(X)$ is independent of $Z$. This increases fairness because it removes any impact of $Z$ on the decision by imposing independence as a constraint. However, some \cite{dwork2012fairness,hardt2016equality} have argued that this above definition of fairness can be too restrictive in some cases and that \textit{equalized odds} is a better definition of fairness. Its only difference is that in (\ref{eq:dispimp}) we replace $\delta(X)\perp \!\!\! \perp Z$ with $(\delta(X)\perp \!\!\! \perp Z) | Y$. That is, equalized odds ask for independence of $\delta(X)$ and $Z$ when conditioned on $Y$. We can interpret equalized odds as requiring error rates to be similar across protected groups. Finally, a notion associated equalized odds is that of \textit{equal opportunity}, which amounts to enforcing $(\delta(X)\perp \!\!\! \perp Z) | Y=y$, for some value $y$. This is relevant when one particular type of error is of more interest than another.

\subsection{Relevance of Fairness to Watermarking}

Fairness is relevant to the design of robust tests for two reasons. First, it provides a well-established technical language with which to discuss our requirement of robustness. Past dynamic watermarking techniques require exact system knowledge, and as such the corresponding watermarking tests will have error rates that are biased over inevitable system perturbations or uncertainties. Fairness notions such as \textit{equalized odds} and \textit{equal opportunity} allow for more specific framing of the problem and thus give a framework to design more robust methods for dynamic watermarking.

Second, robust cyber-security methods will have improved social impacts, which is the most general way of interpreting ``fairness". For example, smart homes can have many sensors. Changes in the distribution of sensor noise can correlate with factors such as climate, which correlates with geography and thus attributes like race, ethnicity, or class. A systemic bias in the ability to detect threats thus yields, and possibly perpetuates, systemic bias in outcomes among these groups. Robustness of cyber-security methods thus have the potential to improve societal fairness of the corresponding methods.

\subsection{Outline}

In Sect. \ref{sec:dwm}, we outline key terminology and results in dynamic watermarking. In Sect. \ref{sec:fixed}, we present our covariance-robust dynamic watermarking scheme for the case of fixed, but unknown, measurement noise covariance. This is then extended in Sect. \ref{sec:varying} to the case where measurement noise covariance is allowed to slowly vary. Sect. \ref{sec:results} presents empirical results that demonstrate efficacy of our approach.

\section{Preliminaries}

\label{sec:dwm}

We describe the LTI system and attack models, and then review existing results about dynamic watermarking.

\subsection{LTI System Model}

Consider a partially-observed MIMO LTI system
\begin{equation}
\begin{aligned}
x_{n+1} &= Ax_n+Bu_n+w_n\\
y_n &= Cx_n+z_n+v_n
\end{aligned}
\end{equation}
for $x_n,w_n\in\R^p, u_n\in\R^q$ and $y_n,z_n,v_n\in\R^m$. Here $w_n$ is mean-zero i.i.d. multivariate Gaussian process noise with covariance matrix $\Sigma_W$, and this is independent of $z_n$ that is i.i.d. Gaussian measurement noise with mean-zero; but we assume that the covariance matrix for $z_n$ is a linear function $\Sigma_Z(\theta)$ of a set of parameters $\theta\in\mathcal{P}\subset\R^d$ taking values in polyhedron $\mathcal{P}$. For now, $\theta$ is assumed constant but unknown for any fixed system. The $v_n$ is an additive signal chosen by an attacker who seeks to corrupt sensor measurements.

Stabilizability of $(A,B)$ and detectability of $(A,C)$ imply the existence of a controller $K$ and observer $L$ such that $A+BK$ and $A+LC$ are Schur stable. The closed-loop system can be stabilized using the control input $u_n = K\hat{x}_n$, where $\hat{x}_n$ is the observer-estimated state. Define $\tilde{x}_n^{\vphantom{\textsf{T}}} = \begin{bmatrix} x_n^{\textsf{T}} & \hat{x}_n^{\textsf{T}} \end{bmatrix}^{\textsf{T}}$, $\underline{D}^{\vphantom{\textsf{T}}} = \begin{bmatrix} I^{\vphantom{\textsf{T}}} & 0^{\vphantom{\textsf{T}}} \end{bmatrix}^{\textsf{T}}$, $\underline{L}^{\vphantom{\textsf{T}}} = \begin{bmatrix} 0^{\vphantom{\textsf{T}}} & -L^{\textsf{T}} \end{bmatrix}^{\textsf{T}}$, and
\begin{equation}
\underline{A} = \begin{bmatrix}
A & BK \\
-LC & A+BK+LC
\end{bmatrix}.
\end{equation}
We can write the closed-loop evolution of the state and estimated state when $v_n\equiv 0$ as $\tilde{x}_{n+1} = \underline{A}\tilde{x}_n + \underline{D}w_n + \underline{L}z_n$. Alternatively, we may define the observation error $\delta_n=\hat{x}_n-x_n$. Let $\breve{x}_n^{\vphantom{\textsf{T}}} = \begin{bmatrix} x_n^{\textsf{T}} & \delta_n^{\textsf{T}} \end{bmatrix}^{\textsf{T}}$, $\doubleunderline{D}^{\vphantom{\textsf{T}}} = \begin{bmatrix} I^{\vphantom{\textsf{T}}} & -I^{\vphantom{\textsf{T}}} \end{bmatrix}^{\textsf{T}}$, $\doubleunderline{L}^{\vphantom{\textsf{T}}} = \underline{L}^{\vphantom{\textsf{T}}}$, and
\begin{equation}
\label{eq:closedloop}
\doubleunderline{A} = \begin{bmatrix}
A + BK & BK \\
0 & A+LC
\end{bmatrix}.
\end{equation}
The closed-loop system for this change of variables is $\breve{x}_{n+1} = \doubleunderline{A}\breve{x}_n + \doubleunderline{D}w_n + \doubleunderline{L}z_n$. Note that $\doubleunderline{A}$ is Schur stable since both $A+BK$ and $A+LC$ are Schur stable.

\subsection{Attack Model}

Following \cite{hespanhol2017dynamic}, we consider attacks where $v_n=\alpha(Cx_n+z_n) + C\eta_n + \zeta_n$ for a fixed $\alpha\in\R$ and i.i.d. Gaussian $\zeta_n$ with mean-zero and covariance matrix $\Sigma_S$. Here, the $\eta_n$ are chosen to follow the process $\eta_{n+1} = (A+BK)\eta_n + \omega_n$, where $\omega_n$ are similarly i.i.d. Gaussian with mean-zero and covariance matrix $\Sigma_O$. The implication is that the attacker minimizes or mutes the true output $Cx_n+z_n$, and instead replaces it with a simulated output that follows the system dynamics and is thus not easily distinguishable as false. Furthermore, the attacker has access to process $w_n$ and measurement noise $z_n$. With this attack, the closed-loop systems above become $\tilde{x}_{n+1} = \underline{A}\tilde{x}_n + \underline{D}w_n + \underline{L}(z_n+v_n)$ and $\breve{x}_{n+1} = \doubleunderline{A}\breve{x}_n + \doubleunderline{D}w_n + \doubleunderline{L}(z_n+v_n)$.

\subsection{(Nonrobust) Dynamic Watermarking}

The steady-state distribution of $\delta_n$ in an unattacked system will be Gaussian with mean-zero and a covariance matrix of
\begin{equation}
\label{eq:sigmadelta}
\Sigma_{\Delta} = (A+LC)^{\vphantom{\textsf{T}}}\Sigma_{\Delta}^{\vphantom{\textsf{T}}}(A+LC)^{\textsf{T}} + \Sigma_W + L^{\vphantom{\textsf{T}}}\Sigma_Z(\theta)^{\vphantom{\textsf{T}}}L^{\textsf{T}}.
\end{equation}
Dynamic watermarking adds a small amount of Gaussian noise $e_n$, the values unknown to the attacker, into the control input $u_n=K\hat{x}_n+e_n$. This private excitation has mean-zero and covariance matrix $\Sigma_E$. Defining $\underline{B}^{\vphantom{\textsf{T}}} = \begin{bmatrix} B^{\textsf{T}} & B^{\textsf{T}} \end{bmatrix}^{\textsf{T}}$ and $\doubleunderline{B}^{\vphantom{\textsf{T}}} = \begin{bmatrix} B^{\textsf{T}} & 0^{\vphantom{\textsf{T}}} \end{bmatrix}^{\textsf{T}}$, the closed-loop systems with watermarking are given by $\tilde{x}_{n+1} = \underline{A}\tilde{x}_n + \underline{B}e_n + \underline{D}w_n + \underline{L}(z_n+v_n)$ and $\breve{x}_{n+1} = \doubleunderline{A}\breve{x}_t + \doubleunderline{B}e_n + \doubleunderline{D}w_n + \doubleunderline{L}(z_n+v_n)$, respectively.

The watermarking noise $e_n$ leaves a detectable signal in the measurements $y_n$, which can detect the presence of an attack $v_n$ by comparing the observer error $C\hat{x}_n-y_n$ to previous values of the watermark $e_{n-k}$ for some integer $k>0$. Specifically, the work in \cite{hespanhol2017dynamic} proposes the tests
\begin{align}
&\textstyle\aslim_{N\rightarrow\infty}\frac{1}{N}\sum_{n=0}^{N-1}(C\hat{x}_n-y_n)^{\vphantom{\textsf{T}}}(C\hat{x}_n-y_n)^{\textsf{T}} = \nonumber\\
&\hspace{5.5cm}C^{\vphantom{\textsf{T}}}\Sigma_{\Delta}^{\vphantom{\textsf{T}}}C^{\textsf{T}}+\Sigma_Z^{\vphantom{\textsf{T}}}\label{eq:test1}
\\
\label{eq:test2}
&\textstyle\aslim_{N\rightarrow\infty}\frac{1}{N}\sum_{n=0}^{N-1}(C\hat{x}_n-y_n)^{\vphantom{\textsf{T}}}e_{n-k'-1}^{\textsf{T}} = 0,
\end{align}
where $k'=\min_{k\ge1}\{C^{\vphantom{\textsf{T}}}(A+BK)^kB^{\textsf{T}}\ne0\}$. Any modeled attack passing these tests can be shown to asymptotically have zero power $\aslim_{N\rightarrow\infty}\frac{1}{N}\sum_{n=0}^{N-1}v_n^{\textsf{T}}v_n^{\vphantom{\textsf{T}}} = 0$ \cite{hespanhol2017dynamic}.  

Finally, \cite{hespanhol2017dynamic} also provides a test statistic for implementing the above test. Define $\psi_n = \begin{bmatrix} (C\hat{x}_n-y_n)^{\textsf{T}} & e_{n-k'-1}^{\textsf{T}} \end{bmatrix}^{\textsf{T}}$ and $S_n=\sum_{i=n+1}^{n+\ell}\psi_n^{\vphantom{\textsf{T}}}\psi_n^{\textsf{T}}$. Then the negative log-likelihood of a Wishart distribution is
\begin{equation}\tag{DW}
\label{eq:oldstatistic}
\begin{aligned}
\mathcal{L}=&(m+q+1-\ell)\log\det{S_n} \\
&+ \textrm{trace}\Bigg\lbrace
\begin{bmatrix}
\big(C^{\vphantom{\textsf{T}}}\Sigma_{\Delta}^{\vphantom{\textsf{T}}} C^{\textsf{T}}+\Sigma_Z\big)^{-1} & 0 \\
0 & \Sigma_E^{-1}
\end{bmatrix}\times S_n\Bigg\rbrace.
\end{aligned}
\end{equation}
This can be used to perform a statistical hypothesis test to detect attacks when using dynamic watermarking.

\section{Covariance-Robust Dynamic Watermarking}
\label{sec:crdw}

We develop covariance-robust dynamic watermarking methods for two different cases. The first is where $\theta$ is fixed but unknown, and the second is where $\theta$ is slowly varying.

\subsection{Fixed But Unknown Noise Covariance}
\label{sec:fixed}
We begin by stating our assumptions for this case. First, we assume that we have knowledge of a set of positive semidefinite matrices $\Sigma_{z,1},\dots,\Sigma_{z,d}$ such that these matrices are affinely independent and $\Sigma_Z(\theta) \in \mathrm{int}(\Omega^Z)$ for the set
\begin{equation}
\Omega^Z=\{\theta_1\Sigma_{z,1}+\cdots+\theta_d{\Sigma}_{z,d}:\mathbf{1}^T\theta=1, \theta\ge\mathbf{0}\}.
\end{equation}
Note that $\Omega^Z$ is a polyhedron, and that this set is defined to be the convex combination of ${\Sigma}_{z,1},\dots,{\Sigma}_{z,d}$. Our first result characterizes $\Omega^{\Delta}$, which is the set of possible $\Sigma_{\Delta}(\theta)$.

\begin{lemma}
	\label{lemma:sigmadelta}
	Let $\bar{\Sigma}_{\delta,k}$ satisfy $\bar{\Sigma}_{\delta,k}=(A+LC)\bar{\Sigma}_{\delta,k}(A+LC)^T+\Sigma_W+L{\Sigma}_{z,k}L^T$. For $\Sigma_{Z}(\theta)=\theta_{1}{\Sigma}_{z,1}+\cdots+\theta_{d}{\Sigma}_{z,d}$, the solution to (\ref{eq:sigmadelta}) is $\Sigma_{\Delta}(\theta)=\theta_{1}\bar{\Sigma}_{\delta,1}+\cdots+\theta_{d}\bar{\Sigma}_{\delta,d}$.
\end{lemma}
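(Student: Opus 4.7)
The plan is essentially a direct verification combined with an appeal to uniqueness of the solution to the discrete Lyapunov equation. The strategy has two ingredients: linearity of the Lyapunov operator $X \mapsto (A+LC)X(A+LC)^{\textsf{T}}$ and the simplex constraint $\mathbf{1}^{\textsf{T}}\theta = 1$, which is what allows the single $\Sigma_W$ term in (\ref{eq:sigmadelta}) to be distributed across the components.

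Concretely, I would first form the candidate $\hat\Sigma_{\Delta}(\theta) = \sum_{k=1}^{d}\theta_{k}\bar{\Sigma}_{\delta,k}$ and substitute it into the right-hand side of (\ref{eq:sigmadelta}) with $\Sigma_Z(\theta) = \sum_k \theta_k \Sigma_{z,k}$. Pulling the scalars $\theta_k$ out of $(A+LC)(\cdot)(A+LC)^{\textsf{T}}$ and out of $L(\cdot)L^{\textsf{T}}$ is justified by linearity. The constant $\Sigma_W$ term can be rewritten as $\sum_k \theta_k \Sigma_W$ because $\sum_k \theta_k = 1$, which is exactly what the polyhedral definition of $\Omega^Z$ guarantees. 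Grouping the three sums termwise yields $\sum_k \theta_k\bigl[(A+LC)\bar{\Sigma}_{\delta,k}(A+LC)^{\textsf{T}} + \Sigma_W + L\Sigma_{z,k}L^{\textsf{T}}\bigr]$, which collapses via the defining equation for each $\bar{\Sigma}_{\delta,k}$ to $\sum_k \theta_k\bar{\Sigma}_{\delta,k} = \hat\Sigma_{\Delta}(\theta)$. So $\hat\Sigma_{\Delta}(\theta)$ solves (\ref{eq:sigmadelta}).

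The final step is to note uniqueness: because $A+LC$ is Schur stable (as stated in the preliminaries), the discrete Lyapunov equation (\ref{eq:sigmadelta}) admits a unique solution for every fixed right-hand side. Hence $\Sigma_{\Delta}(\theta)$ coincides with $\hat\Sigma_{\Delta}(\theta)$.

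There is no real obstacle here; the only subtlety worth flagging is that one must not forget the role of the simplex constraint $\mathbf{1}^{\textsf{T}}\theta = 1$. Without it the $\Sigma_W$ term would fail to decompose into a convex combination, and the stated affine structure of $\Sigma_{\Delta}(\theta)$ in the $\bar{\Sigma}_{\delta,k}$ would break. I would mention this explicitly so that the reader sees why the parameterization is chosen on the simplex and why the result transfers the convex structure from $\Omega^Z$ to the corresponding set $\Omega^{\Delta}$.
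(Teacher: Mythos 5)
Your proof is correct and follows essentially the same route as the paper, which simply observes that both sides of (\ref{eq:sigmadelta}) are linear in $\Sigma_\Delta$ and $\Sigma_Z(\theta)$. Your write-up is more careful than the paper's one-line argument: you rightly make explicit the two details it leaves implicit, namely that the simplex constraint $\mathbf{1}^{\textsf{T}}\theta=1$ is needed to distribute the constant $\Sigma_W$ term, and that Schur stability of $A+LC$ gives uniqueness of the Lyapunov solution so the verified candidate is in fact \emph{the} solution.
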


\begin{proof}
This immediately follows by noting that both sides of (\ref{eq:sigmadelta}) are linear in the matrices $\Sigma_\Delta$ and $\Sigma_Z(\theta)$.
\end{proof}

Since $E[\psi_n^{\vphantom{\textsf{T}}}\psi_n^{\textsf{T}}] = \mathrm{blkdiag}\{C^{\vphantom{\textsf{T}}}\Sigma_{\Delta}^{\vphantom{\textsf{T}}} C^{\textsf{T}}+\Sigma_Z,\Sigma_E\}$, we need to characterize the set $\Omega$ of feasible matrices in terms of $\theta$. 

\begin{lemma}
Let $\bar{\Sigma}_k = \mathrm{blkdiag}\{C^{\vphantom{\textsf{T}}}\bar{\Sigma}_{\delta,k}^{\vphantom{\textsf{T}}} C^{\textsf{T}}+\Sigma_{z,k}, \Sigma_E\}$. Then $\Omega=\{\theta_1\bar{\Sigma}_k+\cdots+\theta_d\bar{\Sigma}_d:\mathbf{1}^T\theta=1,\theta\ge\mathbf{0}\}$.
\end{lemma}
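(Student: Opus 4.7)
The plan is to show that the parametrization of $\Omega$ follows by directly combining the affine structure of $\Sigma_Z(\theta)$ with Lemma~\ref{lemma:sigmadelta}, together with the fact that the $\Sigma_E$ block is unaffected by $\theta$. Concretely, I would start from the expression $E[\psi_n^{\vphantom{\textsf{T}}}\psi_n^{\textsf{T}}] = \mathrm{blkdiag}\{C^{\vphantom{\textsf{T}}}\Sigma_{\Delta}(\theta)^{\vphantom{\textsf{T}}}C^{\textsf{T}}+\Sigma_Z(\theta),\Sigma_E\}$ given just above the lemma, and substitute in the two affine parametrizations of $\Sigma_Z(\theta)$ and $\Sigma_\Delta(\theta)$ to express the upper-left block as a single affine combination in $\theta$.

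First I would expand the upper-left block using $\Sigma_Z(\theta) = \sum_{k=1}^d \theta_k \Sigma_{z,k}$ (by hypothesis) and $\Sigma_\Delta(\theta) = \sum_{k=1}^d \theta_k \bar{\Sigma}_{\delta,k}$ (by Lemma~\ref{lemma:sigmadelta}). Pulling $C$ through the sum and using linearity gives
\begin{equation*}
C^{\vphantom{\textsf{T}}}\Sigma_\Delta(\theta)^{\vphantom{\textsf{T}}}C^{\textsf{T}} + \Sigma_Z(\theta) = \sum_{k=1}^d \theta_k \bigl( C^{\vphantom{\textsf{T}}}\bar{\Sigma}_{\delta,k}^{\vphantom{\textsf{T}}} C^{\textsf{T}} + \Sigma_{z,k} \bigr).
\end{equation*}
Next I would handle the lower-right block, which is the one subtlety: although $\Sigma_E$ does not depend on $\theta$, the constraint $\mathbf{1}^T\theta = 1$ lets us write $\Sigma_E = \sum_{k=1}^d \theta_k \Sigma_E$, so that the whole block-diagonal matrix assembles as an affine combination across the same coefficients $\theta$.

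Combining these two observations with the definition of $\bar{\Sigma}_k = \mathrm{blkdiag}\{C^{\vphantom{\textsf{T}}}\bar{\Sigma}_{\delta,k}^{\vphantom{\textsf{T}}} C^{\textsf{T}}+\Sigma_{z,k}, \Sigma_E\}$ and the fact that block-diagonalization is itself linear, I would conclude $E[\psi_n^{\vphantom{\textsf{T}}}\psi_n^{\textsf{T}}] = \sum_{k=1}^d \theta_k \bar{\Sigma}_k$, with $\theta$ ranging over the simplex $\{\mathbf{1}^T\theta=1,\theta\ge\mathbf{0}\}$ inherited from the assumption $\Sigma_Z(\theta)\in\mathrm{int}(\Omega^Z)$. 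This yields exactly the claimed characterization of $\Omega$. Since each step is either a substitution, an application of Lemma~\ref{lemma:sigmadelta}, or a use of $\mathbf{1}^T\theta=1$, there is no real obstacle here; the only point to be careful about is not forgetting to absorb the constant $\Sigma_E$ block into the convex combination via the normalization constraint on $\theta$.
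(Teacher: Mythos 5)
Your proposal is correct and matches the paper's argument, which simply states that the result follows from linearity in $\Sigma_\Delta$ and $\Sigma_Z$; you have merely spelled out that one-line claim in full detail. The point you flag about absorbing the constant $\Sigma_E$ block into the convex combination via $\mathbf{1}^{\textsf{T}}\theta = 1$ is exactly the implicit step the paper leaves to the reader.
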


\begin{proof}
This follows by the linearity in $\Sigma_\Delta$ and $\Sigma_Z$.
\end{proof}

The set $\Omega$ represents covariance matrices of $\psi_n$ that are ``acceptable", according to the original set $\Omega^Z$ of observation noise covariances that we should not mistake for attacks.

\begin{lemma}
	\label{lemma:fulldim}
	The set $\Omega$ is of dimension $d-1$.
\end{lemma}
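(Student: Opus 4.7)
The set $\Omega$ is, by construction, the image of the standard $(d-1)$-simplex $\Delta = \{\theta \in \mathbb{R}^d : \mathbf{1}^T\theta = 1,\, \theta \geq 0\}$ under the affine map $T : \theta \mapsto \sum_{k=1}^d \theta_k \bar{\Sigma}_k$, so immediately $\dim \Omega \leq \dim \Delta = d-1$. My plan is to prove the matching lower bound by showing that $\bar{\Sigma}_1, \ldots, \bar{\Sigma}_d$ are affinely independent in the space of symmetric matrices, which is equivalent to $T$ being injective on the affine hull of $\Delta$.

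Suppose then that $\sum_k \mu_k \bar{\Sigma}_k = 0$ with $\sum_k \mu_k = 0$; the aim is to deduce $\mu = 0$. Using the block-diagonal form $\bar{\Sigma}_k = \mathrm{blkdiag}\{C\bar{\Sigma}_{\delta,k} C^T + \Sigma_{z,k},\, \Sigma_E\}$, the lower-right block condition collapses to $(\sum_k \mu_k)\Sigma_E = 0$, which is automatic, leaving only the upper-left constraint
\begin{equation*}
\sum_k \mu_k \bigl(C\bar{\Sigma}_{\delta,k} C^T + \Sigma_{z,k}\bigr) = 0.
\end{equation*}
Setting $\tilde{Z} = \sum_k \mu_k \Sigma_{z,k}$ and $\tilde{M} = \sum_k \mu_k \bar{\Sigma}_{\delta,k}$ and invoking the linearity exploited in Lemma~\ref{lemma:sigmadelta} (together with $(\sum_k \mu_k)\Sigma_W = 0$), $\tilde{M}$ satisfies $\tilde{M} = (A+LC)\tilde{M}(A+LC)^T + L\tilde{Z}L^T$, while the block equation becomes $C\tilde{M}C^T + \tilde{Z} = 0$.

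To finish, it suffices to show $\tilde{Z} = 0$, since affine independence of $\Sigma_{z,1},\ldots,\Sigma_{z,d}$ combined with $\sum_k \mu_k = 0$ then forces $\mu = 0$. Substituting $\tilde{Z} = -C\tilde{M}C^T$ into the Lyapunov identity and expanding $(A+LC)\tilde{M}(A+LC)^T$ cancels the $LC\tilde{M}C^T L^T$ term and yields
\begin{equation*}
\tilde{M} = A\tilde{M}A^T + A\tilde{M}C^T L^T + LC\tilde{M}A^T,
\end{equation*}
from which I would conclude $\tilde{M} = 0$ and hence $\tilde{Z} = 0$.

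The main obstacle is precisely this last extraction: showing that the only symmetric solution to $\tilde{M} = A\tilde{M}A^T + A\tilde{M}C^T L^T + LC\tilde{M}A^T$ is $\tilde{M}=0$, equivalently that the associated linear operator on $\mathbb{S}^p$ has no nontrivial fixed point. I would approach this by unrolling the fixed-point iteration using Schur stability of $A+LC$ to control the underlying Lyapunov recursion, and then appealing to detectability of $(A,C)$ to rule out spurious invariant directions; if that falls short, the cleanest fix would be to add a mild nondegeneracy assumption on the triple $(A,L,C)$, which the setup already essentially forces through stabilizability/detectability.
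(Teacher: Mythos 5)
Your reduction is sound and, up to the point where it stops, is a faithful expansion of what the paper's one-line proof gestures at: the block-diagonal structure disposes of the $\Sigma_E$ block, the affine combination with $\sum_k\mu_k=0$ kills the $\Sigma_W$ term in the Lyapunov equation (this is the appeal to Lemma~\ref{lemma:sigmadelta}), and the problem correctly collapses to injectivity of $\tilde Z \mapsto \tilde Z + C\tilde M(\tilde Z)C^{\textsf{T}}$ with $\tilde M(\tilde Z)=\sum_{j\ge0}(A+LC)^jL\tilde ZL^{\textsf{T}}((A+LC)^{\textsf{T}})^j$. The algebra producing $\tilde M = A\tilde MA^{\textsf{T}} + A\tilde MC^{\textsf{T}}L^{\textsf{T}} + LC\tilde MA^{\textsf{T}}$ also checks out.

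The gap is the final step, and it is not a formality: it is where the entire content of the lemma lives, and it is exactly where the paper's proof invokes the two ingredients you never actually use, namely that $L$ has full column rank and that $(A+LC,C)$ is observable. Writing $G_j=C(A+LC)^jL$, your claim is equivalent to injectivity of the completely positive perturbation of the identity $\tilde Z\mapsto\tilde Z+\sum_{j\ge0}G_j\tilde ZG_j^{\textsf{T}}$. This does not follow from soft structure alone: already with a single term $G=\mathrm{diag}(1,-1)$ and $\tilde Z=e_1e_2^{\textsf{T}}+e_2e_1^{\textsf{T}}$ one has $G\tilde ZG^{\textsf{T}}=-\tilde Z$, so a map of the form $I+\mathcal P$ with $\mathcal P$ completely positive can have a nontrivial kernel on symmetric (indefinite) matrices. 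Hence ``unrolling the fixed-point iteration using Schur stability'' and ``appealing to detectability'' is a plan rather than an argument; Schur stability only guarantees convergence of the series defining $\tilde M$, which you have already used, and you would still need to show that the specific family $\{G_j\}$ generated by $(A,L,C)$ excludes eigenvalue $-1$ of the associated positive map. Until that is established --- or an explicit nondegeneracy hypothesis is added, as you yourself concede may be necessary --- the lemma is not proved.
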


\begin{proof}
	This follows from Lemma \ref{lemma:sigmadelta}, the fact that $L$ is of full column-rank, and the observability of $(A+LC,C)$, which in turn follows from the observability of $(A,C)$.
\end{proof}

Finally, consider a modification of (\ref{eq:oldstatistic}) given by
\begin{multline}
\label{eq:loglikelihood}
\Lcal(S_n,V)=(m+q+1-\ell)\log\det{S_n} + \\\textrm{trace}\big\lbrace V S_n\big\rbrace- \ell\log\det{V}.
\end{multline}
Note (\ref{eq:loglikelihood}) is the negative log-likelihood of an $(m+q)\times (m+q)$ Wishart distribution with scale matrix $V^{-1}$ and $\ell$ degrees of freedom. Now, we may present our test statistic. Let $\Omega^{-1}=\{V: V^{-1}\in\Omega\}$ and define the test statistic
\begin{equation}
T(S_n)=\min_{V\in\Omega^{-1}}\Lcal(S_n,V)
\end{equation}
for the composite null hypothesis $H_0: E[\psi_n^{\vphantom{\textsf{T}}}\psi_n^{\textsf{T}}]\in\textrm{int}(\Omega)$. For some $0\le\nu$, consider the test
\begin{equation}
\label{eq:test3}
\begin{cases}
	\textrm{reject } H_0 & \textrm{if } T(S_n)>\nu\\
	\textrm{accept } H_0 & \textrm{if } T(S_n)\le\nu.
\end{cases}
\end{equation}
Since $\arg\min_{V\in\Omega^{-1}}\Lcal(S_n,V)=S_n^{-1}$, this proposed test is equivalent to the generalized likelihood ratio test.

\begin{theorem}
	For large enough $\ell$, the decision rule (\ref{eq:test3}) using test statistic $T(S_n)$ satisfies equal opportunity with respect to the null hypothesis and where the protected attribute is the true measurement noise covariance $\Sigma_Z(\theta)\in\mathrm{int}(\Omega^Z)$.
\end{theorem}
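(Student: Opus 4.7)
The goal is to show $\delta(S_n)\perp\!\!\!\perp\theta^* \mid H_0$ (asymptotically in $\ell$), where $\theta^*$ parametrizes the true measurement-noise covariance. My plan is to exploit the paper's identification of the test (\ref{eq:test3}) with the generalized likelihood ratio test (GLRT) and to apply Wilks' theorem to the composite null $\{V : V^{-1}\in\Omega\}$, whose limiting $\chi^2$ distribution is free of $\theta^*$.

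First I would set up the Wishart framework. Under $H_0$, the stationary covariance of $\psi_n$ is $V_{\theta^*}^{-1}\in\mathrm{int}(\Omega)$, and Schur stability of $\doubleunderline{A}$ in (\ref{eq:closedloop}) makes the closed-loop process geometrically mixing. A law of large numbers then gives $S_n/\ell \to V_{\theta^*}^{-1}$ in probability, uniformly on compact subsets of $\mathrm{int}(\mathcal{P})$, and for large $\ell$ the law of $S_n$ is well-approximated by a Wishart distribution with scale $V_{\theta^*}^{-1}$ and $\ell$ degrees of freedom, which is the likelihood whose negative logarithm is (\ref{eq:loglikelihood}).

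Second I would invoke Wilks. By Lemma \ref{lemma:fulldim}, $\Omega^{-1}$ is a $(d-1)$-dimensional submanifold of the $\tfrac{1}{2}(m+q)(m+q+1)$-dimensional cone of positive-definite matrices. Under $H_0$ with $\theta^*\in\mathrm{int}(\mathcal{P})$, the generalized likelihood ratio statistic $2\bigl(T(S_n) - \Lcal(S_n,S_n^{-1})\bigr)$ converges in distribution to $\chi^2_k$ with $k = \tfrac{1}{2}(m+q)(m+q+1) - (d-1)$, whose law is independent of $\theta^*$. Because the remark preceding the theorem identifies the decision rule (\ref{eq:test3}) with this GLRT, the corresponding rejection probability $P(T(S_n)>\nu \mid H_0, \theta^*)$ has the same large-$\ell$ limit for every $\theta^*\in\mathrm{int}(\mathcal{P})$; in the fairness language of Sect.~\ref{sec:fairness} this is exactly $\delta(S_n)\perp\!\!\!\perp\theta^* \mid H_0$ in the limit, i.e.\ equal opportunity with respect to the null.

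The principal obstacle I anticipate is twofold: the observations $\psi_n$ are not i.i.d.\ but come from a closed-loop stationary process, whereas the standard Wilks theorem is stated for i.i.d.\ samples; and one needs not merely pointwise-in-$\theta^*$ but uniform-in-$\theta^*$ convergence so that a single threshold $\nu$ gives a Type I error with a $\theta^*$-free limit. Both issues should be tractable by leveraging geometric mixing from Schur stability of $\doubleunderline{A}$ to control the Wishart approximation error, together with a uniform large-deviation bound on $\{S_n/\ell \notin \mathrm{int}(\Omega)\}$ (the event on which the constrained minimizer becomes active and the Wilks expansion degrades) on compact subsets of $\mathrm{int}(\mathcal{P})$. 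Once these uniform bounds are in place, the $\chi^2_k$ limit is inherited from the standard Wishart Wilks analysis, and the desired equal opportunity follows.
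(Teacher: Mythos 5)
Your proposal follows essentially the same route as the paper's proof: identify the decision rule (\ref{eq:test3}) with the generalized likelihood ratio test, invoke Wilks' theorem using the regularity supplied by Lemma \ref{lemma:fulldim} and the assumption $\Sigma_Z(\theta)\in\mathrm{int}(\Omega^Z)$, and conclude that the limiting null distribution of the statistic is a fixed $\chi^2$ law independent of the true covariance, so the Type I error is asymptotically the same for every $\theta$, which is exactly equal opportunity with respect to $H_0$. You are in fact more explicit than the paper about the non-i.i.d.\ and uniform-in-$\theta$ caveats and about the degrees-of-freedom count (your $\tfrac{1}{2}(m+q)(m+q+1)-(d-1)$ is the standard Wilks count, whereas the paper writes $\chi^2(m+q-p)$), but the underlying argument is the same.
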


\begin{proof}
	Due to Lemma \ref{lemma:fulldim} and our assumption that $\Sigma_Z(\theta) \in \mathrm{int}(\Omega^Z)$, $T(S_n)$ satisfies the Le Cam regularity conditions required for the application of Wilk's Theorem \cite{wilks1938large}. This means $-2T(S_n)$ will be asymptotically distributed as a $\chi^2(m+q-p)$ random variable plus a fixed constant \textit{regardless of the true value of $\Sigma_{\Delta}$}, and thus implies that the event of a Type I error is independent of $\Sigma_{\Delta}$.
\end{proof}

This is a useful result because it implies that, in the proper regime, our test can come arbitrarily close to satisfying the initial goal of remaining robust to some uncertainty in the distribution of the measurement noise. However, $\Omega^{-1}$ is a non-convex set, and so the computation of $T(S_n)$ is difficult. To this end, we propose the approximate test statistic
\begin{equation}\tag{CRDW}
\label{eq:test4}
\begin{array}{rrcl}
\bar{T}(S_n)  = \min&\multicolumn{3}{l}{\Lcal(S_n,V)}\\
\mathrm{s.t.}&\sum_{k=1}^p\theta_k\bar{\Sigma}^{-1}_k&\succeq&V,\\
&\begin{bmatrix} V & I \\ I & \sum_{k=1}^p\theta_k\bar{\Sigma}_k\end{bmatrix}&\succeq&0,\\
&\mathbf{1}^{\textsf{T}}\theta^{\vphantom{\textsf{T}}}&=&1,\\
&\theta&\ge&\mathbf{0}.
\end{array}
\end{equation}

\begin{lemma}
	\label{lemma:feasreg}
	For any $V\in\Omega^{-1}$, there exists a $\theta\in\R^p$ such that $(V,\theta)$ is a feasible solution to the optimization problem defining test (\ref{eq:test4}).
\end{lemma}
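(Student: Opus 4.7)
The plan is to take the candidate $\theta$ furnished by the very definition of $\Omega^{-1}$ and then verify that each of the four constraints in (\ref{eq:test4}) is automatically satisfied for this choice. First I would unpack the definitions: $V \in \Omega^{-1}$ means $V^{-1}\in\Omega$, so by the characterization of $\Omega$ there exist nonnegative weights $\theta_1,\dots,\theta_d$ with $\mathbf{1}^{\textsf{T}}\theta = 1$ such that $V^{-1}=\sum_{k}\theta_k\bar{\Sigma}_k$. This choice handles the two affine/sign constraints on $\theta$ immediately.

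Next I would dispatch the Schur-complement constraint. Writing $W:=\sum_{k}\theta_k\bar{\Sigma}_k$, my choice forces $W=V^{-1}\succ 0$, so the standard Schur-complement criterion applied to the bottom-right block $W$ reduces positive semidefiniteness of $\begin{bmatrix} V & I \\ I & W\end{bmatrix}$ to the condition $V - W^{-1}\succeq 0$, i.e.\ $V-V\succeq 0$. This is trivial.

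The only nontrivial step is the remaining constraint $\sum_{k}\theta_k\bar{\Sigma}_k^{-1}\succeq V$. Substituting $V=(\sum_k\theta_k\bar{\Sigma}_k)^{-1}$, this is precisely the inequality
\[
\sum_{k}\theta_k\bar{\Sigma}_k^{-1}\;\succeq\;\Bigl(\sum_{k}\theta_k\bar{\Sigma}_k\Bigr)^{-1},
\]
which is the matrix (operator) Jensen inequality for the map $X\mapsto X^{-1}$ on the positive-definite cone, a classical fact since $X\mapsto X^{-1}$ is operator convex. I would invoke this operator-convexity result to conclude, noting that each $\bar{\Sigma}_k\succ 0$ (which follows from $\Sigma_E\succ 0$, the positive-definiteness of the block $C\bar{\Sigma}_{\delta,k}C^{\textsf{T}}+\Sigma_{z,k}$ under the standing assumptions, and the fact that $\theta$ lies in the interior-compatible simplex so the convex combination remains positive definite).

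The main (and really only) obstacle is recognizing the last constraint as an instance of operator convexity of the matrix inverse, since the other three constraints fall out essentially by construction. A minor bookkeeping point to flag is that the statement writes $\theta\in\mathbb{R}^p$ and indexes the sums up to $p$, whereas the ambient parametrization used in Section~\ref{sec:fixed} is $\theta\in\mathbb{R}^d$; I would simply note this and proceed with the intended index $d$.
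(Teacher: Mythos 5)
Your proposal is correct and follows essentially the same route as the paper's proof: choose $\theta$ from the convex-combination representation of $V^{-1}\in\Omega$, reduce the LMI constraint via the Schur complement to the trivial inequality $V\succeq V$, and verify the first constraint by the convexity of the matrix inverse. The only difference is presentational — the paper proves the inequality $\sum_k\theta_k\bar{\Sigma}_k^{-1}\succeq(\sum_k\theta_k\bar{\Sigma}_k)^{-1}$ from scratch via a second-derivative computation on the quadratic forms $a^{\textsf{T}}X(\tau)^{-1}a$, whereas you cite it as the classical operator convexity of $X\mapsto X^{-1}$ (and your note on the $p$ versus $d$ indexing slip is a fair catch).
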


\begin{proof}
First observe that any $V\in\Omega^{-1}$ can be written as $V = (\sum_{k=1}^p\theta_k\bar{\Sigma}_k)^{-1}$ for some nonzero $\theta$ such that $\mathbf{1}^{\textsf{T}}\theta^{\vphantom{\textsf{T}}}=1$. Thus, it holds trivially that
\begin{equation}
\label{eq:trivial}
\textstyle(\sum_{k=1}^p\theta_k\bar{\Sigma}_k)^{-1}\succeq V\succeq(\sum_{k=1}^p\theta_k\bar{\Sigma}_k)^{-1}
\end{equation}
The right-most constraint in (\ref{eq:trivial}) can be restated using the Schur complement, and this reformulation is exact. Since $\sum_{k=1}^p\theta_k\bar{\Sigma}_k\succeq0$, the Schur complement implies the second constraint in (\ref{eq:test4}) is equivalent to $V-(\sum_{k=1}^p\theta_k\bar{\Sigma}_k)^{-1}\succeq0$. 
	
	The first constraint in (\ref{eq:test4}) follows from the convexity of the matrix inverse for positive semidefinite matrices: Letting $X(\tau)=(1-\tau)X_1+\tau X_2$ for positive definite $n\times n$ matrices $X_1,X_2$ and $0\le\tau\le1$, we have $\frac{\nabla^2}{\nabla \tau^2}X(\tau)^{-1} = 2X^{-1}(\tau)X'(\tau)X^{-1}(\tau)X'(\tau)X^{-1}(\tau)$. For any $a\in\R^n$, the function $\phi_a(\tau)=a^{\textsf{T}}X^{-1}(\tau)a^{\vphantom{\textsf{T}}}$ will have second derivative $\phi_a''(\tau) = 2a^{\textsf{T}}X^{-1}(\tau)X'(\tau)X^{-1}(\tau)X'(\tau)X^{-1}(\tau)a^{\vphantom{\textsf{T}}}\ge0$ due to the positive-semidefiniteness of $X(\tau)^{-1}$, so $(1-\tau)\phi_a(0)+\tau\phi_a(1)\ge\phi_a(\tau)$. Since this holds for any $a$, we have that
\begin{equation}
\label{eq:invconvexity}
\textstyle\sum_{k=1}^p\theta_k\bar{\Sigma}_k^{-1}\succeq(\sum_{k=1}^p\theta_k\bar{\Sigma}_k)^{-1}.
\end{equation}
	
	\noindent The first constraint in (\ref{eq:test4}) follows from (\ref{eq:trivial}) and (\ref{eq:invconvexity}).
\end{proof}

\begin{remark}
\label{rem:fir}
It was shown in \cite{hespanhol2017dynamic} that test (\ref{eq:test2}) ensures $\alpha=0$ in any attack such that it holds true. In that case, we have
\begin{multline}
\label{eq:looseupperbound}
\textstyle\aslim_{N\rightarrow\infty}\frac{1}{N}\sum_{n=0}^{N-1}(C\hat{x}_n-y_n)^{\vphantom{\textsf{T}}}(C\hat{x}_n-y_n)^{\textsf{T}}\\
\textstyle	= C^{\vphantom{\textsf{T}}}\Sigma^{\vphantom{\textsf{T}}}_{\Delta}(\theta)C^{\textsf{T}} + \Sigma_Z(\theta) + \\
\textstyle\Sigma_S + \aslim_{N\rightarrow\infty}\frac{1}{N}\sum_{n=0}^{N-1}C^{\vphantom{\textsf{T}}}\eta_n^{\vphantom{\textsf{T}}}\eta_n^{\textsf{T}}C^{\textsf{T}},
\end{multline}
	since the Schur stability of $A+BK$ implies that any effect of $x_0$ and $\eta_0$ are reduced to zero asymptotically. Since $\Sigma_S$ and $\aslim_{N\rightarrow\infty}\frac{1}{N}\sum_{n=0}^{N-1}C^{\vphantom{\textsf{T}}}\eta_n^{\vphantom{\textsf{T}}}\eta_n^{\textsf{T}}C^{\textsf{T}}$ are both positive semidefinite, meaning that
\begin{multline}
\textstyle\aslim\frac{1}{N}\sum_{n=0}^{N-1}(C\hat{x}_n-y_n)^{\vphantom{\textsf{T}}}(C\hat{x}_n-y_n)^{\textsf{T}}\succeq \\
\textstyle C^{\vphantom{\textsf{T}}}\Sigma^{\vphantom{\textsf{T}}}_{\Delta}(\theta)C^{\textsf{T}} + \Sigma_Z(\theta).
\end{multline}
	
	Inverting both sides of this implies that, in the case that $\Sigma_S+ \aslim_{N\rightarrow\infty}\frac{1}{N}\sum_{n=0}^{N-1}C^{\vphantom{\textsf{T}}}\eta_n^{\vphantom{\textsf{T}}}\eta_n^{\textsf{T}}C^{\textsf{T}}\ne0$, we can generally expect that $S_n^{-1}\preceq (C^{\vphantom{\textsf{T}}}\Sigma^{\vphantom{\textsf{T}}}_{\Delta}(\theta)C^{\textsf{T}} + \Sigma_Z(\theta))^{-1}\in\Omega^{-1}$. The takeaway is that the looseness of the upper bound (\ref{eq:invconvexity}) should not greatly decrease the power of the modified test in the presence of test (\ref{eq:test2}), as the tight lower bound is more germane to situations where the system is actually being attacked.
\end{remark}

\begin{remark}
	If the dimension $m+q$ is large, then the optimization (\ref{eq:test4}) may be expensive to solve from scratch each time. Furthermore, $S_n$ will likely not change drastically between runs when $\ell$ is large. So, lighter-weight first-order methods such as ADMM can be used instead \cite{wen2010alternating}. These generally take longer to converge to high levels of accuracy, but have the advantage of being able to be readily warm-started.
\end{remark}

\subsection{Slowly Varying Unknown Noise Covariance}
\label{sec:varying}

A key difference between this setting and that of the static distribution is that a shift in the observer noise covariance in one period can have impacts on $\Sigma_{\Delta}$ over the next few periods that do not easily fit into our previous representation of the $\Omega$. This is because it will take many steps before the covariance of $\delta_n$ approaches its asymptotic limit in $\Omega$. Thus, to accommodate a dynamically changing distribution of $z_n$, we must use an expansion of the set $\Omega$.

We modify our setup for this subsection. The true covariance of $\delta_n$ and $z_n$ are $\Sigma_{\Delta_{n}}$ and $\Sigma_{Z_{n}}$, respectively. Let $\Psi_{n} = \Sigma_{Z_n} - \Sigma_{Z_{n-1}}$ and $\Phi_n^j = (A+LC)^jL\Psi_{n}L^{\textsf{T}}{(A+LC)^j}^{\textsf{T}}$. Note that all $\Sigma_{Z_n}$ are still assumed to be in $\Omega^Z$. Finally, we make some additional assumptions. Since the spectral radius of $A+LC$ is less than one, there exists some induced norm (denote this $\|\cdot\|$) such that $\|A+LC\|<1$ \cite{horn2012matrix}. We assume $\theta$ changes every step but $\Sigma_{Z_0}\in\Omega$ and all $\Psi_{n}$ satisfy $\|\Psi_{n}\|\le\xi$ for some known value of $\xi>0$. We also assume the system starts at steady state in the sense $\Sigma_{\Delta_{0}}=(A+LC)\Sigma_{\Delta_{0}}(A+LC)^{\textsf{T}}+\Sigma_W+L\Sigma_{Z_{0}}L^{\textsf{T}}$. Under these assumptions we have:

\begin{lemma}
\label{lemma:bound}
Let $\varepsilon\in\mathbb{R}$ be defined as
\begin{equation}
\label{eq:maxerror1}
\varepsilon=\frac{\xi\|C\|^2\|L\|^2\|A+LC\|^2\sqrt{m}}{\left(1-\|A+LC\|^2\right)^2}
\end{equation}
Then $C^{\vphantom{\textsf{T}}}\Sigma_{\Delta_{n}}^{\vphantom{\textsf{T}}}C^{\textsf{T}}+\Sigma_{Z_{n}}^{\vphantom{\textsf{T}}}\in\Omega\oplus\{E: -\varepsilon I\preceq E\preceq \varepsilon I\}$, where $\oplus$ is the Minkowski sum for all $n$.
\end{lemma}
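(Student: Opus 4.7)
The plan is to split $C\Sigma_{\Delta_n}C^{\textsf{T}}+\Sigma_{Z_n}$ into a piece that manifestly lies in $\Omega$ plus a symmetric residual whose spectral norm I bound by $\varepsilon$. For each $n$, I define $\bar\Sigma_{\Delta_n}$ as the unique solution of the discrete Lyapunov equation $\bar\Sigma_{\Delta_n}=(A+LC)\bar\Sigma_{\Delta_n}(A+LC)^{\textsf{T}}+\Sigma_W+L\Sigma_{Z_n}L^{\textsf{T}}$, i.e., the steady-state observer-error covariance that would arise if $\theta$ were frozen at its current value. Since $\Sigma_{Z_n}\in\Omega^Z$ expands as a convex combination of $\Sigma_{z,1},\dots,\Sigma_{z,d}$, Lemma~\ref{lemma:sigmadelta} gives that $\bar\Sigma_{\Delta_n}$ is the corresponding convex combination of the $\bar\Sigma_{\delta,k}$, so $C\bar\Sigma_{\Delta_n}C^{\textsf{T}}+\Sigma_{Z_n}\in\Omega$. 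The problem reduces to showing $\|C(\Sigma_{\Delta_n}-\bar\Sigma_{\Delta_n})C^{\textsf{T}}\|_2\le\varepsilon$, because the residual is symmetric.

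Next, I would derive a clean one-step recursion for the error $E_n=\Sigma_{\Delta_n}-\bar\Sigma_{\Delta_n}$. Subtracting the Lyapunov equation for $\bar\Sigma_{\Delta_n}$ from the true update $\Sigma_{\Delta_n}=(A+LC)\Sigma_{\Delta_{n-1}}(A+LC)^{\textsf{T}}+\Sigma_W+L\Sigma_{Z_{n-1}}L^{\textsf{T}}$, and using the linear dependence of $\bar\Sigma_{\Delta_n}$ on $\Sigma_{Z_n}$ to express $\bar\Sigma_{\Delta_{n-1}}-\bar\Sigma_{\Delta_n}$ as the telescoping Lyapunov sum $-\sum_{j=0}^{\infty}(A+LC)^j L\Psi_n L^{\textsf{T}}((A+LC)^j)^{\textsf{T}}$, I obtain $E_n=(A+LC)E_{n-1}(A+LC)^{\textsf{T}}-\sum_{j=0}^{\infty}(A+LC)^j L\Psi_n L^{\textsf{T}}((A+LC)^j)^{\textsf{T}}$. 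The steady-state initial condition forces $E_0=0$, so unrolling yields a closed form for $E_n$ as a double sum over $0\le i\le n-1$, $j\ge 0$ of terms $(A+LC)^{i+j}L\Psi_{n-i}L^{\textsf{T}}((A+LC)^{i+j})^{\textsf{T}}$.

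The remaining step is a norm estimate. Using the induced norm $\|\cdot\|$ guaranteed by the spectral-radius assumption so that $\|A+LC\|<1$, the triangle inequality together with $\|\Psi_l\|\le\xi$ reduces the bound on $\|E_n\|$ to a double geometric series in $i$ and $j$, which converges uniformly in $n$ (a bound of $n$ on the tail count for large $i+j$ is dominated by $i+1$). The resulting estimate is a multiple of $\xi\|L\|^2/(1-\|A+LC\|^2)^2$. Sandwiching by $C$ contributes $\|C\|^2$, and a standard equivalence of norms on symmetric $m\times m$ matrices supplies the factor $\sqrt{m}$ required to pass from the induced norm to the spectral 2-norm. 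Since $CE_nC^{\textsf{T}}$ is symmetric, the spectral bound $\|CE_nC^{\textsf{T}}\|_2\le\varepsilon$ is equivalent to $-\varepsilon I\preceq CE_nC^{\textsf{T}}\preceq\varepsilon I$, and the Minkowski-sum membership follows.

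I expect the main obstacle to be the bookkeeping of the double sum: one must verify that the partial inner sum $\sum_{i}\Psi_{n-i}$, whose norm grows linearly in the index, is nevertheless dominated by a geometric weight so that the final bound stays uniform in $n$ despite the time-varying $\Psi_l$. A secondary subtlety is tracking which power of $\|A+LC\|$ appears in the numerator; the cleanest path is to note that the $j=0$ term of the inner sum is absorbed into $ME_{n-1}M^{\textsf{T}}$ after one step, which is what ultimately yields the $\|A+LC\|^2$ factor in the stated $\varepsilon$.
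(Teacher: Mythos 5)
Your proposal follows essentially the same route as the paper: freeze $\theta$ at its current value to define the instantaneous steady-state covariance (the paper's $\Sigma_{\Delta_{\infty}^{n}}$, your $\bar\Sigma_{\Delta_n}$), observe via Lemma~\ref{lemma:sigmadelta} that $C\bar\Sigma_{\Delta_n}C^{\textsf{T}}+\Sigma_{Z_n}\in\Omega$, write the residual as $-\sum_{i=1}^{n}(A+LC)^{n-i+1}\Upsilon_i\,((A+LC)^{n-i+1})^{\textsf{T}}$ with $\|\Upsilon_i\|\le\xi\|L\|^2(1-\|A+LC\|^2)^{-1}$, and sum the geometric series before converting to the spectral norm with the $\sqrt{m}$ factor. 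The only caveat is your index bookkeeping: the update uses $\Sigma_{Z_n}$ (not $\Sigma_{Z_{n-1}}$), so the unrolled terms carry $(A+LC)^{i+j+1}$ rather than $(A+LC)^{i+j}$ — this is precisely the source of the $\|A+LC\|^2$ in the numerator of $\varepsilon$, which you correctly flag as the subtlety to resolve.
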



\begin{proof}
	Let $\Omega_{m\times m}$ be the set of $m\times m$ upper-left submatrices of elements of $\Omega$, associated with $C^{\vphantom{\textsf{T}}}\Sigma_{\Delta}(\theta)^{\vphantom{\textsf{T}}} C^{\textsf{T}}+\Sigma_Z(\theta)$ terms. We start by noting that 
    \begin{equation}
        \begin{aligned}
        \Sigma_{\Delta_{1}} =& (A+LC)\Sigma_{\Delta_{0}}(A+LC)^{\textsf{T}}+\Sigma_W+L\Sigma_{Z_{1}}L^{\textsf{T}}\\
        =& \Sigma_{\Delta_{0}} + \Phi_{1}^{0}.
        \end{aligned}
    \end{equation}
    Similarly, we can see that $\Sigma_{\Delta_{2}}=\Sigma_{\Delta_{0}}+L\left(\Psi_{0}+\Psi_{1}\right)L^{\textsf{T}}+\Phi_{0}^{1}=\Sigma_{\Delta_{0}}+\Phi_{2}^{0}+\Phi_{1}^{0}+\Phi_{1}^{1}$. Continuing this recursion relation leads to the fact that 
    \begin{equation}
    \label{eq:expansion}
    \textstyle\Sigma_{\Delta_{n}} = \Sigma_{\Delta_{0}} + \sum_{i=0}^{n-1}\sum_{j=0}^{i}\Phi_{n-i}^{j}.
    \end{equation}
	Due to the Schur stability of $A+LC$, the following limit exists, and can be represented as in Lemma \ref{lemma:sigmadelta}.
	\begin{equation}
	\textstyle\Sigma_{\Delta_{\infty}^{k'}}=\lim_{k\rightarrow\infty}\big(\Sigma_{\Delta_{0}}+\sum_{i=k-k'}^{k-1}\sum_{j=0}^{i}\Phi_{k-i}^j\big)
	\end{equation}
	Note that $\Sigma_{\Delta_{\infty}^{k'}}$ is the steady state that $\Sigma_{\Delta_{n}}$ would ultimately reach if $\theta$ (and therefore $\Sigma_{Z_n}$ does not shift after step $k'$; thus, it solves (\ref{eq:sigmadelta}) for $\Sigma_{Z_{k'}}$ and exists in $\Omega_{m\times m}$. Denote $\Upsilon_{i}=\Sigma_{\Delta_{\infty}^{i}}-\Sigma_{\Delta_{\infty}^{i-1}}$. Then,
	\begin{equation}
	\begin{aligned}
	\label{eq:summation}
	\textstyle\Sigma_{\Delta_{n}}&=\textstyle\lim_{k\rightarrow\infty}\big(\Sigma_{\Delta_{0}}+\sum_{i=k-n}^{k-1}\sum_{j=0}^{i-k+n}\Phi_{k-i}^j\big)\\
	&\textstyle=\Sigma_{\Delta_{\infty}^{n}}-\lim_{k\rightarrow\infty}\big(\sum_{i=k-n}^{k-1}\big(\sum_{j=i-k+n+1}^{i}\Phi_{k-i}^j\big)\big)\\
	&\textstyle=\Sigma_{\Delta_{\infty}^{n}}-\sum_{i=1}^{n}(A+LC)^{n-i+1}\Upsilon_{i}{(A+LC)^{n-i+1}}^{\textsf{T}}
	\end{aligned}
	\end{equation}
	Note that the term in the limit in the first equality is a constant in $k$ due to a simple re-indexing of (\ref{eq:expansion}). This is convenient because we can now break $\Sigma_{\Delta_{n}}$ into an element known to be in $\Omega_{m\times m}$ and an error term. Our goal is now to choose $\varepsilon$ large enough to bound
	\begin{equation}
	\label{eq:error}
	\min_{P\in\Omega_{m\times m}}\big\|C^{\vphantom{\textsf{T}}}\Sigma_{\Delta_{n}}^{\vphantom{\textsf{T}}}C^{\textsf{T}}+\Sigma_{Z_{n}}-P\big\|_2,
	\end{equation}
    over all paths that $\Sigma_{Z_n}$ can take. An easy bound on the minimization is to simply set $P=C^{\vphantom{\textsf{T}}}\Sigma_{\Delta_{\infty}^{n}}^{\vphantom{\textsf{T}}}C^{\textsf{T}}+\Sigma_{Z_{n}}$. Then, $\varepsilon$ only needs to exceed
	\begin{equation}
	\textstyle\big\|\sum_{i=1}^{n}C(A+LC)^{n-i+1}\Upsilon_{i}{(A+LC)^{n-i+1}}^{\textsf{T}}C^{\textsf{T}}\big\|_2
	\end{equation}
    By sub-multiplicativity of induced norms,
	\begin{equation}
	\label{eq:upsilonbound}
	\begin{aligned}
	\|\Upsilon_{i}\| &= \textstyle\big\|\sum_{j=0}^{\infty}\Phi_{i}^j\big\|\le\sum_{j=0}^{\infty}\|(A+LC)\|^{2j}\|L\|\|\Psi_{n'+k_i}\|\\
	&=\xi\|L\|^2\big(1-\|A+LC\|^2\big)^{-1}
	\end{aligned}
	\end{equation}
	Finally, using the fact that $\|\cdot\|_2\le\sqrt{m}\|\cdot\|$ \cite{feng2003equivalence} and applying (\ref{eq:upsilonbound}) to the error term from (\ref{eq:error}) yields the desired result.
\end{proof}

\begin{remark}
Due to the topological equivalence of induced norms, the dependence of our choice of norm $\|\cdot\|$ on $A+LC$ can only affect the value of $\xi$ required by a constant $\sqrt{m}$.
\end{remark}

\begin{corollary}
If $\|A+LC\|_2<1$, then the statement in Lemma \ref{lemma:bound} holds for $\|\cdot\|=\|\cdot\|_2$ and
\begin{equation}
\label{eq:maxerror2}
\varepsilon=\frac{\xi\|C\|_2^2\|L\|_2^2\|A+LC\|_2^2}{\left(1-\|A+LC\|_2^2\right)^2}
\end{equation}
\end{corollary}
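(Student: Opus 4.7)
The plan is to specialize the proof of Lemma \ref{lemma:bound} by taking the induced norm to be the spectral norm throughout, which lets us skip the final conversion step that introduced the $\sqrt{m}$ factor. The hypothesis $\|A+LC\|_2<1$ is precisely what licenses this substitution: the only role played by the abstract norm $\|\cdot\|$ in the lemma was to guarantee a submultiplicative norm under which $A+LC$ is a strict contraction, so that geometric series in $\|(A+LC)\|^{2j}$ converge. The spectral norm is itself submultiplicative, so whenever $\|A+LC\|_2<1$ we can run the entire argument in the 2-norm directly.

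First, I would recycle the decomposition in (\ref{eq:summation}), which only uses the Schur stability of $A+LC$ together with Lemma \ref{lemma:sigmadelta}, and is therefore norm-agnostic. This gives the same expression of $\Sigma_{\Delta_n}$ as the sum of a term $\Sigma_{\Delta_\infty^n}\in\Omega_{m\times m}$ and an error involving $\Upsilon_i$. Second, picking the same approximation $P=C^{\vphantom{\textsf{T}}}\Sigma_{\Delta_\infty^n}^{\vphantom{\textsf{T}}}C^{\textsf{T}}+\Sigma_{Z_n}$ in (\ref{eq:error}) reduces the problem to bounding
\begin{equation}
\textstyle\big\|\sum_{i=1}^{n}C(A+LC)^{n-i+1}\Upsilon_{i}{(A+LC)^{n-i+1}}^{\textsf{T}}C^{\textsf{T}}\big\|_2
\end{equation}
purely in the spectral norm.

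Third, I would reprove the analog of (\ref{eq:upsilonbound}) using $\|\cdot\|_2$ in place of $\|\cdot\|$, invoking submultiplicativity of the spectral norm and the geometric series in $\|A+LC\|_2^{2j}$ to get $\|\Upsilon_i\|_2\le \xi\|L\|_2^2(1-\|A+LC\|_2^2)^{-1}$. Fourth, applying submultiplicativity once more to the outer sum, together with another geometric series in $\|A+LC\|_2^{2(n-i+1)}$, and pulling out the factors $\|C\|_2^2$ and $\|A+LC\|_2^2$, yields exactly (\ref{eq:maxerror2}) as an upper bound on the quantity in (\ref{eq:error}). Since the last step of Lemma \ref{lemma:bound}'s proof — the conversion $\|\cdot\|_2\le\sqrt{m}\|\cdot\|$ via \cite{feng2003equivalence} — is no longer needed, the $\sqrt{m}$ factor drops out, giving the claimed expression.

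The step that requires the most care is verifying that each use of submultiplicativity in the lemma's proof genuinely needed only submultiplicativity of the norm (not any special property of the abstract contractive norm), so that replacing $\|\cdot\|$ by $\|\cdot\|_2$ under the hypothesis $\|A+LC\|_2<1$ is valid line-for-line. I do not expect any real obstacle here, since the spectral norm is submultiplicative and compatible with itself, and both geometric series converge under the strengthened hypothesis; the corollary is therefore essentially a cleaner restatement of Lemma \ref{lemma:bound} available in the favorable case.
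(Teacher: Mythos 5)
Your proposal is correct and matches the paper's own argument: the paper likewise proves the corollary by rerunning the proof of Lemma \ref{lemma:bound} with $\|\cdot\|=\|\cdot\|_2$ (valid since $\|A+LC\|_2<1$ makes the spectral norm itself the required contractive submultiplicative norm) and stopping before the $\|\cdot\|_2\le\sqrt{m}\|\cdot\|$ conversion, which removes the $\sqrt{m}$ factor. No gaps.
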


\begin{proof}
The proof of this result is almost identical to the proof of the previous lemma with the only changes that $\|\cdot\| = \|\cdot\|_2$ and that we stop after applying (\ref{eq:upsilonbound}) to (\ref{eq:error}).
\end{proof}

\begin{figure*}[t]
	\begin{center}
    	\begin{subfigure}[t]{0.9\linewidth}
        	\makebox[\textwidth]{\input{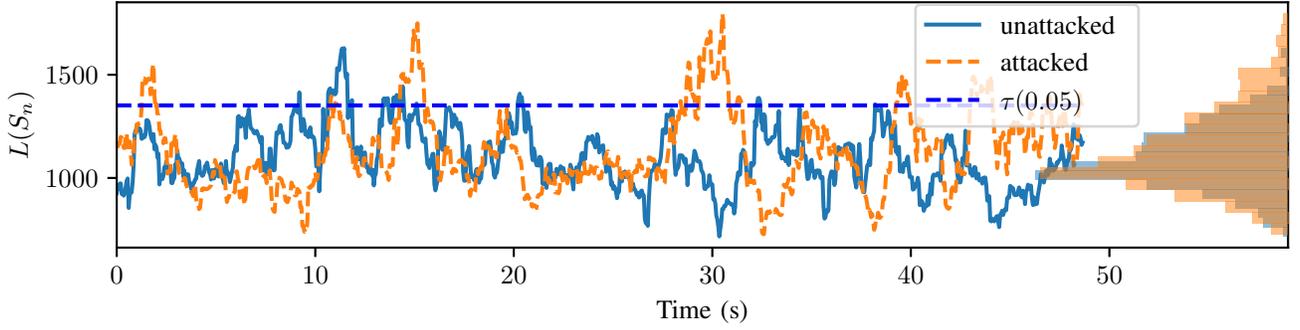}}
        	\caption{Test statistic (\ref{eq:oldstatistic})}
        	\label{fig:fixed_orig}
    	\end{subfigure}
    	\begin{subfigure}[t]{0.9\linewidth}
        	\makebox[\textwidth]{\input{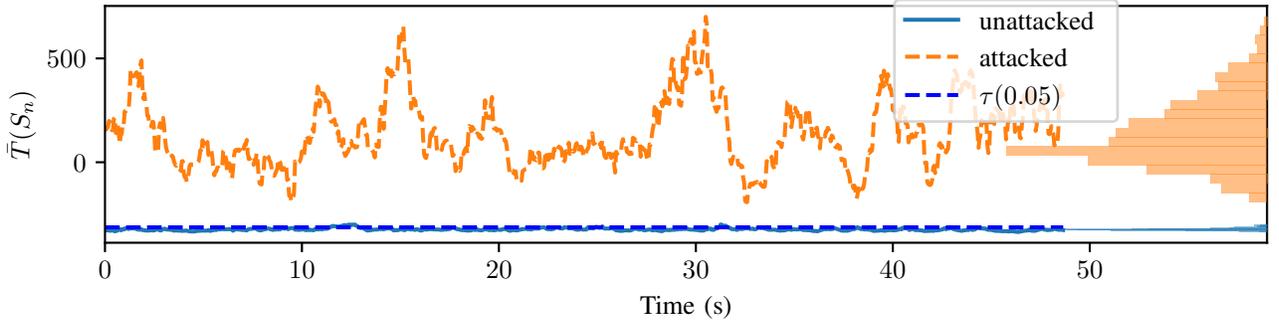}}
        	\caption{Test statistic (\ref{eq:test4})}
        	\label{fig:fixed_new}
    	\end{subfigure}
    	\caption{The evolution and histogram of test statistics (\ref{eq:oldstatistic}) and (\ref{eq:test4}) on the attacked and unattacked systems where $\Sigma_Z$ is fixed, but unknown to the tester. In this case, the nonrobust test statistic (\ref{eq:oldstatistic}) is unable to clearly distinguish the attacked from the unattacked system, whereas the new test statistic (\ref{eq:test4}) can.}
    	\label{fig:fixed_cov}
	\end{center}
\end{figure*}

With this $\varepsilon$, it is straightforward to extend the previous test statistic (\ref{eq:test4}) to this new expansion of $\Omega$ as long as $\bar{\Sigma}_k-\varepsilon I$ remains positive definite for all $k$. In this case, we may define our new test statistic as
\begin{equation}\tag{CRDW*}
\label{eq:test5}
\begin{array}{rrcl}
\underline{T}(S_n)  = \min&\multicolumn{3}{l}{\Lcal(S_n,V)}\\
\mathrm{s.t.}&\sum_{k=1}^p\theta_k\left(\bar{\Sigma}_k-\varepsilon I\right)^{-1}&\succeq&V,\\
&\begin{bmatrix} V & I \\ I & \varepsilon I+\sum_{k=1}^p\theta_k\bar{\Sigma}_k\end{bmatrix}&\succeq&0,\\
&\mathbf{1}^{\textsf{T}}\theta^{\vphantom{\textsf{T}}}&=&1,\\
&\theta&\ge&\mathbf{0}.
\end{array}
\end{equation}
\begin{remark}
If there is some $k$ so $\bar{\Sigma}_k-\varepsilon I$ is not positive definite, then the first constraint above is not well-defined. Recalling that $V$ is a surrogate for $\left(C^{\vphantom{\textsf{T}}}\Sigma_{\Delta_n}^{\vphantom{\textsf{T}}}C^{\textsf{T}}+\Sigma_{Z_{n}}\right)^{-1}$, we note $V$ trivially satisfies $\Sigma_{Z_n}^{-1}\succeq V$. Thus in this problematic case, we may replace the $\left(\bar{\Sigma}_k-\varepsilon I\right)$ in the first constraint with $\Sigma_{z,k}$, for all $k$. This issue is unlikely to be of practical concern for the same reasons discussed in Remark \ref{rem:fir} regarding the relaxation of the set $\Omega$. Specifically, the structure of the attacks makes it unlikely that the first constraint in (\ref{eq:test5}) would be binding in any case.
\end{remark}

\begin{figure*}[!t]
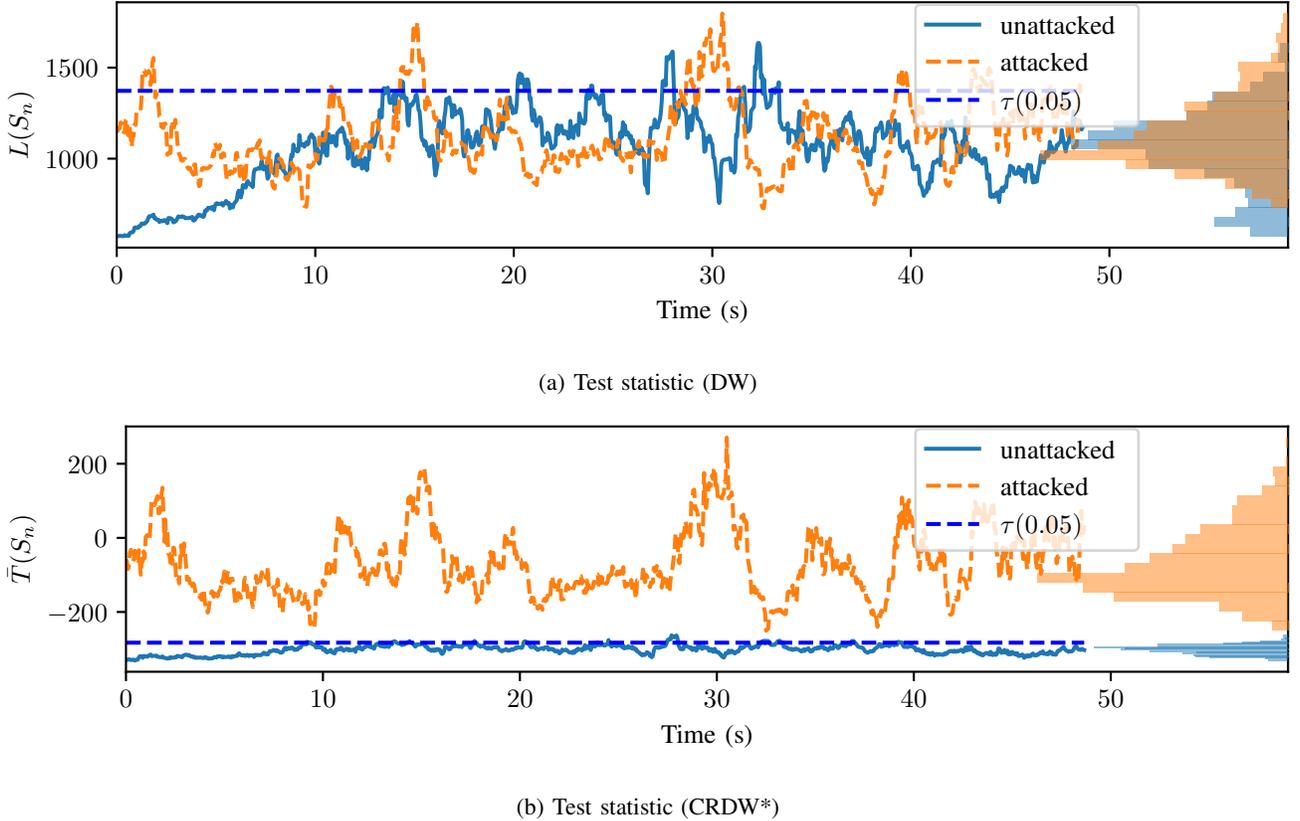

	\begin{center}
    	\begin{subfigure}[t]{0.9\textwidth}
        	\makebox[\textwidth]{\input{orig_test_varying_noise.pgf}}
        	\caption{Test statistic (\ref{eq:oldstatistic})}
        	\label{fig:varying_orig}
    	\end{subfigure}
    	\begin{subfigure}[t]{0.9\linewidth}
        	\makebox[\textwidth]{\input{new_test_varying_noise.pgf}}
        	\caption{Test statistic (\ref{eq:test5})}
        	\label{fig:varying_new}
    	\end{subfigure}
    	\caption{The evolution and histogram of test statistics (\ref{eq:oldstatistic}) and (\ref{eq:test5}) on the attacked and unattacked systems where $\Sigma_Z$ varies as described, again unknown to the tester. Note the robust statistic (\ref{eq:test5}) takes distinctly higher for the attacked values over almost the entire 1000 iterations than in the unattacked system, while the nonrobust statistic (\ref{eq:oldstatistic}) is again unable to clearly distinguish the two.}
    	\label{fig:varying_cov}
	\end{center}
\end{figure*}

\section{Empirical Results}
\label{sec:results}

In this section, we present simulation results that showcase the strength of our method when compared with the original test statistic (\ref{eq:oldstatistic}). We present results for both the case where the noise distribution is fixed but unknown, and for the case where the noise covariance is unknown and slowly-varying.

We use the standard model for simulation of an autonomous vehicle in \cite{turri2013linear}, where the error kinematics of lane keeping and speed control is given by $x^{\textsf{T}} = \begin{bmatrix} \psi & y & s & \gamma & v \end{bmatrix}$ and $u^{\textsf{T}} = \begin{bmatrix} r&a \end{bmatrix}$. Here, $\psi$ is heading error, $y$ is lateral error, $s$ is trajectory distance, $\gamma$ is vehicle angle, $v$ is vehicle velocity, $r$ is steering, and $a$ is acceleration. We linearize and initialize with a straight trajectory and constant velocity $v_{0} = 10$. We then  performed exact discretization with sampling period $t_{s} = 0.05$. This yields the system dynamics

\begin{equation}
A = \begin{bmatrix}
1 & 0 & 0 & \frac{1}{10} & 0 \\
\frac{1}{2} & 1 & 0 & \frac{1}{40} & 0 \\
0 & 0 & 1 & 0 & \frac{1}{2} \\
0 & 0 & 0 & 1 & 0 \\
0 & 0 & 0 & 0 & 1
\end{bmatrix},\;\;B = \begin{bmatrix}
\frac{1}{400} & 0  \\
\frac{1}{2400} & 0  \\
0 & \frac{1}{800}  \\
\frac{1}{20} & 0 \\
0 & \frac{1}{20} 
\end{bmatrix}
\end{equation}

\noindent with $C = 
\begin{bmatrix}
I & 0
\end{bmatrix}
\in\R^{3\times5}$. We use process noise covariance $\Sigma_W = 10^{-8}\times I$. 

All tests use dynamic watermarking with variance $\Sigma_E = \frac{1}{2}I$, and $K$ and $L$ were chosen to stabilize the system without an attack. We conduct four simulations: attacked and non-attacked systems where the measurement noise covariance is fixed, and attacked and non-attacked systems where the measurement noise covariance is allowed to vary. We ran all four simulations for 1000 iterations, or 50 seconds. In all cases, we compare the test metrics using the hypothesis test described in (\ref{eq:test3}), where the measurement noise covariance is assumed to be $10^{-5}\times I$. When simulating the attacked system, we choose an attacker with $\alpha = -1$, $\eta_{0} = 0$, $\Sigma_O = 10^{-8}\times I$, and $\Sigma_S = 10^{-8}\times I$.

\subsection{Fixed Covariance}

We first show our test outperforms in the case where the true measurement noise covariance matrix is fixed but unknown to the tester. In our simulations, the true noise covariance is $\Sigma_Z = 10^{-5}\times\textrm{diag}\{0.18,30,0.18\}$. In all tests, $\Omega^Z$ is described by the $p=4$ extreme points: $\Sigma_{Z,1} = 10^{-6}\times\textrm{diag}\{300,1.8,1.8\}$, $\Sigma_{Z,2} = 10^{-6}\times\textrm{diag}\{1.8,300,1.8\}$, $\Sigma_{Z,3} = 10^{-6}\times\textrm{diag}\{9,9,12\}$, $\Sigma_{Z,4} = 10^{-6}\times\textrm{diag}\{9,9,9\}$. Both the true measurement noise covariance and that incorrectly assumed in test statistic (\ref{eq:oldstatistic}) are in the resulting set. The simulation is run for 1000 steps.

\Cref{fig:fixed_cov} shows the efficacy of our method under this new uncertainty. If test detection is consistent, the negative log likelihood values should be lower under regular conditions, and higher when the model is attacked. In particular, the nonrobust test statistic (\ref{eq:oldstatistic}) is shown in Fig. \ref{fig:fixed_orig} to be wholly unable to distinguish an attacked system from an unattacked system when its assumption on the measurement noise covariation is violated, while Fig. \ref{fig:fixed_new} shows the robust test statistic (\ref{eq:test4}) to be able to do so.

\subsection{Varying Covariance}

Unattacked and attacked simulations were also conducted with a measurement noise distribution that was allowed to vary. We set $\tau=1$ and $\xi=0.00002$, implying $\varepsilon =  7.205\times10^{-6}$. The true measurement noise is initialized at $\Sigma_{Z_0} = 10^{-5}\times\textrm{diag}\{0.9,0.9,1.2\}$. This shifts linearly over the course of 250 iterations to a new value of $\Sigma_{Z_{250}} = 10^{-5}\times\textrm{diag}\{15,15,0.18\}$, at which point it changes direction to shift linearly over 250 iterations to a value of $\Sigma_{Z_{500}} = 10^{-5}\times\textrm{diag}\{30,0.18,0.18\}$. The measurement noise covariance stays at this value for 150 iterations. It then shifts linearly over 200 iterations to a terminal value of $\Sigma_{Z_{850}} = 10^{-5}\times\textrm{diag}\{0.18,30,0.18\}$, which it takes for another 150 iterations before the simulation is terminated. The results for both the nonrobust and robust tests are shown in Fig. \ref{fig:varying_cov}. As in the fixed covariance case, our test is able to distinguish between the attacked and unattacked systems better and more consistently than the nonrobust test that requires unsatisfied assumptions.

\section{Conclusion}
\label{sec:conclusion}

We developed covariance-robust dynamic watermarking tests for detecting sensor attacks on LTI systems in the presence of uncertainty about the measurement noise covariance. We considered cases where the covariance of measurement noise is unknown and either fixed or slowly-varying, and we required our test to be ``fair" with respect to all possible values of the covariance in that it not be more or less powerful for some covariances over others. These reflect real-world needs that will increase as 5G is deployed, because there will be an increase in the deployment of smart CPS systems. In such systems, an ``unfair" test can translate to disparate impact across different users in different environments, which is a problem of algorithmic bias. Future research includes studying how dynamic watermarking can be adapted to other system uncertainties.

%
%
%





\bibliographystyle{IEEEtran}
\bibliography{IEEEabrv,fdwm}

\end{document}